\documentclass[11pt]{article}
\usepackage{amssymb}

\usepackage{graphicx}
\usepackage{amsmath}
\usepackage{mathpazo}
\usepackage{euscript}
\usepackage[pdftex]{hyperref}
 \oddsidemargin        5mm
 \evensidemargin       5mm
 \textheight           225.0mm
 \textwidth            160.0mm
 \topmargin            -10.0mm

\numberwithin{equation}{section}
\makeatletter\@addtoreset{equation}{section}

\newtheorem{theorem}{Theorem}[section]

\newtheorem{definition}[theorem]{Definition}

\newtheorem{proposition}[theorem]{Proposition}
\newtheorem{remark}[theorem]{Remark}

\newenvironment{proof}[1][Proof]{\textbf{#1.} }{\ \rule{0.5em}{0.5em}}

\begin{document}

\title{\textsf{Phase\ coherent states with circular Jacobi polynomials for the
pseudoharmonic oscillator}}
\author{Zouha\"{i}r MOUAYN}
\date{{\small Sultan Moulay Slimane University, Faculty of \ Sciences and Technics
(M'Ghila)  PO.Box 523, B\'{e}ni Mellal, Morocco }\\
{\small mouayn@gmail.com}}
\maketitle

\begin{abstract}
We construct a class of generalized phase coherent states $\mid e^{i\theta
};\gamma ,\alpha ,\varepsilon >$ indexed by points $e^{i\theta }$ of the
unit circle and depending on three positive parameters $\gamma $, $\alpha $
and $\varepsilon $ by replacing the labeling coefficient $z^{n}/\sqrt{n!}$
of the canonical coherent states by circular Jacobi polynomials $%
g_{n}^{\gamma }\left( e^{i\theta }\right) $ with parameter $\gamma \geq 0.$
The special case of $\gamma =0$ corresponds to well known phase coherent states.
The constructed states are superposition of eigenstates of a
one-parameter pseudoharmonic oscillator depending on $\alpha $ and solve the
identity of the state Hilbert space at the limit $\varepsilon \rightarrow
0^{+}.$ Closed form for their wavefunctions are obtained in the case $\alpha
=\gamma +1$ and their associated coherent states transform is defined.
\end{abstract}

\section{Introduction}

The first phase states which appeared in quantum mechanics were the London
phase states \cite{1}:
\begin{equation}
\mid e^{i\theta }>=\sum\limits_{n=0}^{+\infty }e^{in\theta }\mid n>
\label{1.1}
\end{equation}
which actually possess infinite energy and can be approached by physical
states in different ways. Consequently, different phase-enhanced states have
been proposed and studied in the literature; see \cite{2,3} and references therein.

The phase analogue of coherent states have been introduced as eigenstates of
the Susskind-Glogower operator \cite{4}:
\begin{equation}
\mathcal{E}^{SG}:=\sum\limits_{n=0}^{+\infty }\mid n><n+\mid .  \label{1.2}
\end{equation}
There are the so-called phase coherent states (PCS), whose expression in the
Fock basis is given by
\begin{equation}
\mid \frak{\rho }e^{i\theta }>=\left( 1-\frak{\rho }^{2}\right) ^{\frac{1}{2}%
}\sum\limits_{n=0}^{+\infty }\frak{\rho }^{n}e^{in\theta }\mid n>; \quad 0\leq \rho
<1  \label{1.3}
\end{equation}
and represent realistic states of radiation, namely they possess finite
energy and can be synthesized by suitable nonlinear process \cite{5}.
 They can also be viewed as a special case of negative binomial states
\cite{6} or Perelomov's $su( 1,1)$ coherent states via
its Holstein-Primakoff realization with $\frac{1}{2}$ as Bargmann index \cite{7}.

In this paper, we construct a class of generalized phase coherent states
(GPCS) labeled by points $e^{i\theta }$ of the unit circle $S^{1}$ and
depending on three positive parameters: $\gamma ,\alpha $ and $\varepsilon .$
These states belong to the state Hilbert space $L^{2}(\mathbb{R}_{+},dx)$ of
the Hamiltonian with pseudoharmonic oscillator potential (PHO) given by (\cite{8}):
\begin{equation}
\Delta _{a}:=-\frac{d^{2}}{dx^{2}}+x^{2}+\frac{a}{x^{2}},  \label{1.4}
\end{equation}
where $a>0$ is such that $1+\frac{1}{2}\sqrt{1+4a}=\alpha .$

We precisely adopt a formalism of canonical coherent states when written as
superpositions of the harmonic oscillator number states. That is, we present
a GPCS as a superposition of eigenstates of the Hamiltonian in \eqref{1.4}. In this superposition, the role of coefficients $z^{n}/\sqrt{n!%
}$ is played by the circular Jacobi polynomials (\cite[p.230]{9}, \cite{10,11}) given, up to a normalization, by:
\begin{equation}
g_{n}^{\gamma }\left( e^{i\theta }\right) :=\left( n!\right) ^{-1}\left(
\gamma +1\right) _{n\text{ }2}F_{1}(-n,\frac{\gamma }{2}+1,\gamma
+1,1-e^{i\theta })  \label{1.5}
\end{equation}
where $_{2}F_{1}$ is the Gauss hypergeometric function and $\left( \cdot
\right) _{n}$ denotes Pochhammer symbol. These orthogonal polynomials arise
in a class of random matrix ensembles, where the parameter $\gamma $ is
related to the charge of an impurity fixed at $z=1$ in a system of unit
charges located on $S^{1}$ at the complex values given by eigenvalues of a
member of this matrix ensemble \cite{9}, \cite{12}.

The class of GPCS we are introducing contains the form of the well known PCS
in \eqref{1.3} as the special case $\gamma =0.$ The identity of the
state Hilbert space $L^{2}\left( \mathbb{R}_{+},dx\right) $ carrying the
GCPS is solved at the limit $\varepsilon \rightarrow 0^{+}$ by a similar way
in a previous work \cite{13}. Furthermore, if we link $\gamma $
with the parameter $\alpha $ controlling the singular part of PHO potential
in \eqref{1.4} by $\alpha =\gamma +1,$ then we can establish a
closed form for the constructed states. In this case, we propose a suitable
definition for the associated coherent states transform and we check that it
maps the eigenstates of the Hamiltonian $\Delta _{a},$ which are defined $\mathbb{R}_{+}$ onto the normalized circular Jacobi polynomials defined on $S^{1}$.

The paper is organized as follows. In Section 2, we recall briefly some
needed spectral properties of the Hamiltonian with PHO potential. Section 3
is devoted to the coherent states formalism we will be using. This formalism
is applied in Section 4 so as to construct a class of phase coherent states
in the state Hilbert space of the Hamiltonian. In Section 5 we give a closed
form for these states an we discuss their associated coherent states
transform.

\section{The pseudoharmonic oscillator}

The PHO potential was pointed out in \cite{8} and \cite{14} and studied by many authors (see \cite{15} and references
therein). It can be used to calculate the vibrational energies of a diatomic
molecules, with the form
\begin{equation}
V_{\varrho ,\kappa _{0}}\left( x\right) :=\varrho \left( \frac{x}{\kappa _{0}%
}-\frac{\kappa _{0}}{x}\right) ^{2} , \label{2.1}
\end{equation}
where $\kappa _{0}$ $>0$ denotes the equilibrium bond length which is the
distance between the diatomic nuclei, and \ $\varrho >0$ with $F=\varrho
\kappa _{0}^{-2}$ represents a constant force. The associated stationary Schr\"{o}dinger equation reads
\begin{equation}
-\frac{d^{2}}{dx^{2}}\psi \left( x\right) +\varrho \left( \frac{x}{\kappa
_{0}}-\frac{\kappa _{0}}{x}\right) ^{2}\psi \left( x\right) =\lambda \psi
\left( x\right) ,  \label{2.2}
\end{equation}
with $\psi \left( 0\right) =0,$ namely $\psi $ satisfies the Dirichlet
boundary condition. It is an exactly solvable equation. Indeed, according to
\cite[p.11288]{16}) the energy spectrum is given by
\begin{equation}
\lambda _{n}^{\varrho ,\kappa _{0}}:=4\kappa _{0}^{-1}\sqrt{\varrho }\left(
n+\frac{1}{2}+\frac{1}{4}\left( \sqrt{1+4\varrho \kappa _{0}^{2}}-2\kappa
_{0}\sqrt{\varrho }\right) \right) ,n=0,1,2,\cdots  \label{2.3}
\end{equation}
whereas the wave functions of the exact solutions of \eqref{2.2} take the form
\begin{equation}
<x\mid n;\varrho ,\kappa _{0}>\propto x^{q}\exp \left( -\frac{\sqrt{\varrho }%
}{2\kappa _{0}}x^{2}\right) \text{ }_{1}\digamma _{1}\left( -n,q+1,\frac{%
\sqrt{\varrho }}{2\kappa _{0}}x^{2}\right) ,  \label{2.4}
\end{equation}
where $q=\frac{1}{2}\left( 1+\sqrt{1+4\varrho \kappa _{0}^{2}}\right) $ and $%
_{1}\digamma _{1}$ denotes the confluent hypergeometric function which can
also be expressed in terms of Laguerre polynomials as (\cite[p.240]{17}):
\begin{equation}
_{1}\digamma _{1}\left( -n,\nu ,u\right) =\frac{n!}{\left( \nu \right) _{n}}%
L_{n}^{\left( \nu -1\right) }\left( u\right) ,  \label{2.5}
\end{equation}
where the Pochhammer symbol may also be defined by the Euler gamma function as
\begin{equation}
\left( \nu \right) _{0}=1,\left( \nu \right) _{n}=\nu \left( \nu +1\right)
\cdots\left( \nu +n-1\right) =\frac{\Gamma \left( \nu +n\right) }{\Gamma \left(
\nu \right) }; \quad n=1,2, \cdots.  \label{2.6}
\end{equation}
To simplify the notation, we introduce the parameters $a:=\varrho \kappa
_{0}^{2}$ and \ we put $\ \kappa _{0}^{-1}\sqrt{\varrho }=1$, and thereby
the Hamiltonian in \eqref{2.2} takes the form given in \eqref{1.4} by the operator $\Delta _{a}$ which is also called isotonic
oscillator \cite{18} or Gol'dman-Krivchenkov Hamiltonian \cite{16}.
The spectrum of $\Delta _{a}$ in\ the Hilbert space $L^{2}\left(
\mathbb{R}_{+},dx\right) $\ reduces to its discrete part consisting of
eigenvalues of the form (\cite[pp.9-10]{19}):
\begin{equation}
\lambda _{n}^{\alpha }:=2\left( 2n+\alpha \right) ,\alpha =1+\frac{1}{2}%
\sqrt{1+4a}; \quad n=0,1,2,\cdots,  \label{2.7}
\end{equation}
and wavefunctions of the corresponding normalized eigenfunctions are given by
\begin{equation}
<x\mid n;\alpha >:=\left( \frac{2n!}{\Gamma \left( \alpha +n\right) }\right)
^{\frac{1}{2}}x^{\alpha -\frac{1}{2}}e^{-\frac{1}{2}x^{2}}L_{n}^{\left(
\alpha -1\right) }\left( x^{2}\right); \quad n=0,1,2, \cdots .  \label{2.8}
\end{equation}
The set of functions in \eqref{2.8} constitutes a complete
orthonormal basis for the Hilbert space $L^{2}(\mathbb{R}_{+},dx).$

\begin{remark} We should note that the eigenvalue problem for the PHO
can also be considered by using raising and lowering operators throughout a
factorization of the Hamiltonian $\Delta _{a}$ in \eqref{1.4} based
on the Lie algebra $su( 1,1)$ commutation relations \cite{15}. \end{remark}

\section{A coherent states formalism}

In general, coherent states are a specific overcomplete family of vectors in
the Hilbert space of the problem that describes the quantum phenomena and
solves the identity of this Hilbert space. These states have long been
known for the harmonic oscillator and their properties have frequently been
taken as models for defining this notion for other models \cite{20}. In this section, we adopt the generalization of canonical coherent states
as in \cite{13}, which extend a well known generalization \cite{18} by considering a kind of the identity resolution that we obtain
as a limit with respect to a certain parameter. Precisely, we propose the
following formalism.

\begin{definition}
Let $\mathcal{H}$ be a separable Hilbert space with an orthonormal basis $%
\left\{ \psi _{n}\right\} _{n=0}^{+\infty }.$ Let $\frak{D}$ $\subseteq
\mathbb{C}$ be an open subset of $\mathbb{C}$ and let $\Phi _{n}:\frak{%
D\rightarrow }\mathbb{C}; \quad n=0,1,2,\cdots,$ be a sequence of complex functions.
Define
\begin{equation}
\mid z,\varepsilon >:=\left( N_{\varepsilon }\left( z\right) \right) ^{-%
\frac{1}{2}}\sum\limits_{n=0}^{+\infty }\frac{\Phi _{n}\left( z\right) }{%
\sqrt{\sigma _{\varepsilon }\left( n\right) }}\mid \psi _{n}>; \quad z\in \frak{D,}%
\varepsilon >0,  \label{3.1}
\end{equation}
where $N_{\varepsilon }\left( z\right) $ is a normalization factor and $%
\sigma _{\varepsilon }\left( n\right)$; $n=0,1,2,\cdots,$ a sequence of positive
numbers depending on $\varepsilon >0$. The set of vectors $\left\{ \mid
z,\varepsilon >,z\in \frak{D}\right\} $ is said to form a set of generalized
coherent states if :\newline
$\left( i\right) $ for each fixed $\varepsilon >0$ and $z\in \frak{D,}$ the
state $\mid z,\varepsilon >$ is normalized, that is $<z,\varepsilon \mid
z,\varepsilon >_{\mathcal{H}}=1,$\newline
$\left( ii\right) $ the states $\left\{ \mid z,\varepsilon >,z\in \frak{D}%
\right\} $ satisfy the following resolution of the identity
\begin{equation}
\lim_{\varepsilon \rightarrow 0^{+}}\int\limits_{\frak{D}}\mid z,\varepsilon
><z,\varepsilon \mid d\mu _{\varepsilon }\left( z\right) =\mathbf{1}_{%
\mathcal{H}}  \label{3.2}
\end{equation}
where $d\mu _{\varepsilon }$ is an appropriately chosen measure and $\mathbf{%
1}_{\mathcal{H}}$ is the identity operator on the Hilbert space $\mathcal{H}%
. $
\end{definition}

We should precise that, in the above definition, the Dirac's\textit{\ bra-ket%
} notation $\mid z,\varepsilon ><z,\varepsilon \mid $ means the
rank-one-operator $\varphi \longmapsto <\varphi \mid z,\varepsilon >_{%
\mathcal{H}}\mid z,\varepsilon >,$ $\varphi \in \mathcal{H}.$ Also, the
limit in $\left( ii\right) $ is to be understood as follows. Define the
operator
\begin{equation}
\mathcal{O}_{\varepsilon }\left[ \varphi \right] \left( \cdot \right)
:=\left( \int\limits_{\frak{D}}\mid z,\varepsilon ><z,\varepsilon \mid d\mu
_{\varepsilon }\left( z\right) \right) \left[ \varphi \right] \left( \cdot
\right)  \label{3.3}
\end{equation}
then the above limit \eqref{3.2} means that $\mathcal{O}_{\varepsilon }\left[
\varphi \right] \left( \cdot \right) \rightarrow $ $\varphi \left( \cdot
\right) $ as $\varepsilon \rightarrow 0^{+},$ \textit{almost every where }%
with respect to $\left( \cdot \right) .$

\begin{remark} The formula \eqref{3.1}\ can be considered as
a generalization of the series expansion of the canonical coherent states
\begin{equation}
\mid z>:=\left( e^{\left| z\right| ^{2}}\right) ^{-\frac{1}{2}%
}\sum_{k=0}^{+\infty }\frac{z^{k}}{\sqrt{k!}}\mid k>; \quad z\in \mathbb{C}.
\label{3.4}
\end{equation}
with $\mid k>$; $k=0,1,2,\cdots,$\ being an orthonormal basis in $L^{2}\left(
\mathbb{R},d\xi \right) $ of eigenstates of the harmonic oscillator, which
is given by the wavefunctions are $<\xi \mid k>:=\left( \sqrt{\pi }%
2^{k}k!\right) ^{-\frac{1}{2}}e^{-\frac{1}{2}\xi ^{2}}H_{k}(\xi )$ where $%
H_{k}\left( .\right) $ denotes the $k$th Hermite polynomial \cite{17}.
\end{remark}

\section{Generalized phase coherent states}

We now will construct a set of normalized states labeled by points $%
e^{i\theta }$ of the unit circle $S^{1}=\left\{ \omega \in \mathbb{C},\left|
\omega \right| =1\right\} $ and depending on positive parameters $\gamma $, $%
\alpha $ and $\varepsilon >0.$ These states, denoted by $\mid e^{i\theta
};\varepsilon ,\gamma ,\alpha >$, will belong to $L^{2}\left( \mathbb{R}%
_{+},dx\right) $ the state Hilbert space of the Hamiltonian $\Delta _{a}$ in
\eqref{1.4} as mentioned in the introduction.

\begin{definition}
Define a set of states $\mid e^{i\theta };\varepsilon ,\gamma ,\alpha >$
labeled by points $e^{i\theta }\in S^{1},\theta \in \left[ 0,2\pi \right] $
and depending on the parameters $\gamma \geq 0$, $\alpha >\frac{3}{2}$ and $%
\varepsilon >0$ by
\begin{equation}
\mid e^{i\theta };\varepsilon ,\gamma ,\alpha >:=\left( \mathcal{N}_{\gamma
,\varepsilon }\left( \theta \right) \right) ^{-\frac{1}{2}%
}\sum\limits_{n=0}^{+\infty }\frac{g_{n}^{\gamma }\left( e^{i\theta }\right)
}{\sqrt{\sigma _{\gamma ,\varepsilon }\left( n\right) }}\mid n;\alpha >
\label{4.1}
\end{equation}
with the precisions:
\begin{enumerate}
  \item[$\bullet$] $\mathcal{N}_{\gamma ,\varepsilon }\left( \theta \right) $
is a normalization factor such that $<e^{i\theta },\varepsilon ,\gamma
,\alpha \mid e^{i\theta },\varepsilon ,\gamma ,\alpha >=1$
  \item[$\bullet$] $g_{m}^{\gamma }\left( e^{i\theta }\right) $ are the circular
Jacobi polynomials defined in \eqref{1.5}
  \item[$\bullet$] $\sigma _{\gamma
,\varepsilon }\left( n\right) ,$ $n=0,1,2,\cdots$ are a sequence of positive
numbers given by
\begin{equation}
\sigma _{\gamma ,\varepsilon }\left( n\right) :=\left( n!\right) ^{-1}\left(
\gamma +1\right) _{n}e^{n\varepsilon },  \label{4.2}
\end{equation}
 \item[$\bullet$] $ \mid n;\alpha >; \quad n=0,1,2,\cdots$ , is the orthonormal basis of $%
L^{2}\left( \mathbb{R}_{+},dx\right) $ given in \eqref{2.6}.
\end{enumerate}
\end{definition}

We shall give the main properties on these states in two propositions.

\begin{proposition}
Let $\gamma \geq 0$ and $\varepsilon >0$ be fixed parameters. Then, the
normalization factor in \eqref{4.1} has the expression
\begin{equation}
\mathcal{N}_{\gamma ,\varepsilon }\left( \theta \right)
 =\frac{\left(1-e^{-\varepsilon }\right) }{\left| 1-e^{-\varepsilon +i\theta }\right|^{2+\gamma }}\; 
{_2F_1}\left(\frac{\gamma }{2}+1,\frac{\gamma }{2}+1,\gamma +1;\frac{e^{-\varepsilon }\left| 1-e^{i\theta }\right| ^{2}}{\left|
1-e^{-\varepsilon +i\theta }\right| ^{2}}\right)  \label{4.3}
\end{equation}
for every $\theta \in \left[ 0,2\pi \right] $.
\end{proposition}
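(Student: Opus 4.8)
The plan is to compute $\mathcal{N}_{\gamma,\varepsilon}(\theta)$ directly from the normalization requirement and then collapse the resulting series to a single hypergeometric function. Since $\{\mid n;\alpha>\}$ is orthonormal, condition $<e^{i\theta},\varepsilon,\gamma,\alpha\mid e^{i\theta},\varepsilon,\gamma,\alpha>=1$ forces $\mathcal{N}_{\gamma,\varepsilon}(\theta)=\sum_{n\geq0}\left|g_n^{\gamma}(e^{i\theta})\right|^2/\sigma_{\gamma,\varepsilon}(n)$. Inserting \eqref{1.5} and \eqref{4.2} and using that $-n,\tfrac{\gamma}{2}+1,\gamma+1$ are real so that $\overline{{}_2F_1(-n,\tfrac{\gamma}{2}+1;\gamma+1;1-e^{i\theta})}={}_2F_1(-n,\tfrac{\gamma}{2}+1;\gamma+1;1-e^{-i\theta})$, the ratio simplifies and I would write, with the abbreviations $t=e^{-\varepsilon}\in(0,1)$, $b=\tfrac{\gamma}{2}+1$, $c=\gamma+1$,
\begin{equation}
\mathcal{N}_{\gamma,\varepsilon}(\theta)=\sum_{n=0}^{+\infty}\frac{(c)_n}{n!}\,t^{n}\,{}_2F_1(-n,b;c;1-e^{i\theta})\,{}_2F_1(-n,b;c;1-e^{-i\theta}).
\label{plan-a}
\end{equation}
Note the special parameter relation $c=2b-1$, equivalently $c=\tfrac{\gamma}{2}+\bigl(\tfrac{\gamma}{2}+1\bigr)$, which will be decisive at the end.

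Next I would record the linear generating function for these terminating series. A short computation (expanding ${}_2F_1(-n,b;c;x)$, interchanging the finite and infinite sums, and using $\sum_m (c+k)_m t^m/m!=(1-t)^{-(c+k)}$) gives
\begin{equation}
\sum_{n=0}^{+\infty}\frac{(c)_n}{n!}\,{}_2F_1(-n,b;c;x)\,t^{n}=(1-t)^{b-c}\bigl(1-t(1-x)\bigr)^{-b}.
\label{plan-b}
\end{equation}
To evaluate the bilinear sum \eqref{plan-a} I would apply Euler's integral representation ${}_2F_1(-n,b;c;x)=\frac{\Gamma(c)}{\Gamma(b)\Gamma(c-b)}\int_0^1 u^{b-1}(1-u)^{c-b-1}(1-xu)^{n}\,du$ (valid since $c-b=\tfrac{\gamma}{2}>0$ and $b>0$ for $\gamma>0$; the case $\gamma=0$ follows by continuity) to the first factor, interchange sum and integral, and sum the remaining series by \eqref{plan-b} with $t$ replaced by $t(1-xu)$. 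Writing $x=1-e^{i\theta}$ and $1-y=e^{-i\theta}$ this produces
\begin{equation}
\mathcal{N}_{\gamma,\varepsilon}(\theta)=\frac{\Gamma(c)}{\Gamma(b)\Gamma(c-b)}\int_0^1 u^{b-1}(1-u)^{c-b-1}\,A^{-\gamma/2}\,B^{-(\gamma/2+1)}\,du,
\label{plan-c}
\end{equation}
with $A=(1-t)+t(1-e^{i\theta})u$ and $B=(1-te^{-i\theta})+te^{-i\theta}(1-e^{i\theta})u$.

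The final step is to recognize \eqref{plan-c} as an Appell function of the first kind. Factoring $A=(1-t)(1-\rho u)$ and $B=(1-te^{-i\theta})(1-\lambda u)$ with $\rho=-\tfrac{t(1-e^{i\theta})}{1-t}$ and $\lambda=-\tfrac{te^{-i\theta}(1-e^{i\theta})}{1-te^{-i\theta}}$, the integral becomes $B(b,c-b)\,F_1\!\bigl(b;\tfrac{\gamma}{2},\tfrac{\gamma}{2}+1;c;\rho,\lambda\bigr)$, and the gamma prefactor cancels the Beta factor exactly. Here the relation $c=\tfrac{\gamma}{2}+\bigl(\tfrac{\gamma}{2}+1\bigr)$ is precisely the hypothesis of the reduction $F_1(a;b_1,b_2;b_1+b_2;x,y)=(1-x)^{-a}\,{}_2F_1\!\bigl(a,b_2;b_1+b_2;\tfrac{y-x}{1-x}\bigr)$; choosing the ordering so that $b_2=\tfrac{\gamma}{2}+1=b$ collapses the Appell function to ${}_2F_1(b,b;c;\cdot)$, the form required by \eqref{4.3}. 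I expect this reduction to be the main obstacle, both in selecting the correct symmetric form of the identity and in checking its hypotheses; everything else is bookkeeping. It then remains to simplify the argument and the prefactors: using $(1-e^{i\theta})(1-e^{-i\theta})=|1-e^{i\theta}|^2$ one finds $1-\rho=\tfrac{1-te^{i\theta}}{1-t}$ and $\tfrac{\lambda-\rho}{1-\rho}=\tfrac{t|1-e^{i\theta}|^2}{|1-te^{i\theta}|^2}$, while the accumulated powers of $(1-t)$ combine to $(1-t)^{1}$ and those of $(1-te^{\pm i\theta})$ combine to $|1-te^{i\theta}|^{-(2+\gamma)}$. Restoring $t=e^{-\varepsilon}$ yields exactly \eqref{4.3}, and as a consistency check one may verify the case $\theta=0$, where both sides reduce to $(1-e^{-\varepsilon})^{-c}$.
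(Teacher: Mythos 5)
Your proposal is correct, and the computations check out (I verified the linear generating function, the passage to the integral, the identification with $F_1$, and the final simplifications $1-\rho=\tfrac{1-te^{i\theta}}{1-t}$, $\tfrac{\lambda-\rho}{1-\rho}=\tfrac{t|1-e^{i\theta}|^{2}}{|1-te^{i\theta}|^{2}}$ and the collapse of the prefactors to $(1-t)\,|1-te^{i\theta}|^{-(2+\gamma)}$). However, you take a genuinely different route from the paper. The paper reduces the series in one step by quoting the known bilinear (Meixner-type) generating function from Erd\'elyi et al.\ \cite[p.85]{21}, namely
\begin{equation*}
\sum_{n\ge 0}\frac{(c)_{n}r^{n}}{n!}\,{_2F_1}(-n,a,c,\xi)\,{_2F_1}(-n,b,c,\zeta)
=\frac{(1-r)^{a+b-c}}{(1-r+\xi r)^{a}(1-r+\zeta r)^{b}}\;{_2F_1}\!\left(a,b,c;\tfrac{r\xi\zeta}{(1-r+\xi r)(1-r+\zeta r)}\right),
\end{equation*}
applied with $c=\gamma+1$, $r=e^{-\varepsilon}$, $a=b=\tfrac{\gamma}{2}+1$, $\xi=1-e^{i\theta}$, $\zeta=\bar{\xi}$; everything after \eqref{plan-a} in your write-up is, in effect, a self-contained proof of this identity via Euler's integral and the Appell $F_1$ reduction. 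What your approach buys is independence from the tabulated formula; what it costs is length and the need to justify the interchange of sum and integral (harmless here, since $|1-\xi u|\le 1$ on $[0,1]$ gives uniform convergence) and the analytic continuation implicit in using the $F_1$ reduction outside $|\rho|,|\lambda|<1$. One small correction of emphasis: the relation $c=2b-1$ is not what makes the Appell reduction applicable. The two exponents produced by your computation are $-(c-b)$ on $A$ and $-b$ on $B$, so the hypothesis $b_{1}+b_{2}=(c-b)+b=c$ of the reduction $F_1(a;b_1,b_2;b_1+b_2;x,y)=(1-x)^{-a}{_2F_1}(a,b_2;b_1+b_2;\tfrac{y-x}{1-x})$ holds automatically for any $b,c$; the relation $a=b=\tfrac{\gamma}{2}+1$, $c=\gamma+1$ only determines the particular values $b_1=\tfrac{\gamma}{2}$, $b_2=\tfrac{\gamma}{2}+1$. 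In other words, your argument actually proves the general bilinear formula, not merely the special case needed here.
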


\begin{proof}
To calculate this factor, we start by writing the condition
\begin{equation}
1=<e^{i\theta },\varepsilon ,\gamma ,\alpha \mid e^{i\theta },\varepsilon
,\gamma ,\alpha >.  \label{4.4}
\end{equation}
Eq.\eqref{4.4} is equivalent to
\begin{equation}
\left( \mathcal{N}_{\gamma ,\varepsilon }\left( \theta \right) \right)
^{-1}\sum\limits_{n=0}^{+\infty }\frac{1}{\sigma _{\gamma ,\varepsilon
}\left( n\right) }g_{n}^{\gamma }\left( e^{i\theta }\right) \overline{%
g_{n}^{\gamma }\left( e^{i\theta }\right) }=1.  \label{4.5}
\end{equation}
Inserting the expression \eqref{4.2} into \eqref{4.5},
we obtain that
\begin{equation}
\mathcal{N}_{\gamma ,\varepsilon }\left( \theta \right)
=\sum\limits_{n=0}^{+\infty }\frac{n!e^{-n\varepsilon }}{\left( \gamma
+1\right) _{n}}g_{n}^{\gamma }\left( e^{i\theta }\right) \overline{%
g_{n}^{\gamma }\left( e^{i\theta }\right) }.  \label{4.6}
\end{equation}
Explicitly, the sum in \eqref{4.6} reads
\begin{equation}
\sum\limits_{n=0}^{+\infty }\frac{\left( \gamma +1\right) _{n}}{%
n!e^{n\varepsilon }}\; {_2F_1}\left(-n,\frac{\gamma }{2}+1,\gamma
+1,1-e^{i\theta }\right) \; {_2F_1}\left(-n,\frac{\gamma }{2}+1,\gamma +1,1-e^{-i\theta
}\right).  \label{4.7}
\end{equation}
Making use of the formula (\cite[p.85]{21}):
\begin{align}
\sum\limits_{n=0}^{+\infty }\frac{\left( c\right) _{n}r^{n}}{n!}\;
{_2F_1}\left(-n,a,c,\xi \right)\; {_2F_1}\left(-n,b,c,\zeta \right)
&=  \frac{\left( 1-r\right) ^{a+b-c}}{\left( 1-r+\xi r\right) ^{a}\left(
1-r+\zeta r\right) ^{b}} \label{4.8} \\
&\times {_2F_1}\left(a,b,c;\frac{r\xi \zeta }{\left(
1-r+\xi r\right) \left( 1-r+\zeta r\right) }\right)  \nonumber
\end{align}
for $c=\gamma +1$, $r=e^{-\varepsilon }$, $a=b=\frac{\gamma }{2}+1$, $\xi
=1-e^{i\theta }$ and $\zeta =\overline{\xi },$ we arrive at the expression
\begin{equation}
\mathcal{N}_{\gamma ,\varepsilon }\left( \theta \right) =\frac{\left(
1-e^{-\varepsilon }\right) }{\left| 1-e^{-\varepsilon +i\theta }\right|
^{2+\gamma }}\; \; {_2F_1}\left(\frac{\gamma }{2}+1,\frac{\gamma }{2}+1,\gamma +1;%
\frac{e^{-\varepsilon }\left| 1-e^{i\theta }\right| ^{2}}{\left|
1-e^{-\varepsilon +i\theta }\right| ^{2}}\right)  \label{4.9}
\end{equation}
This ends the proof.
\end{proof}

\quad 

As mentioned in the introduction, if we consider the case $\gamma
=0,$ then one can check that the quantity in \eqref{4.9} equals to $%
\mathcal{N}_{0,\varepsilon }\left( \theta \right) =\left( 1-e^{-\varepsilon
}\right) ^{-1}$and the sequence of numbers in \eqref{4.2} become $%
\sigma _{0,\varepsilon }\left( n\right) $ =$e^{n\varepsilon }$ while the
circular Jacobi polynomials in \eqref{1.5} reduces to
\begin{equation}
g_{n}^{0}\left( e^{i\theta }\right) =\frac{\left( 1\right) _{n}}{n!}{}%
_{2}F_{1}(-n,1,1,1-e^{i\theta })=e^{in\theta }.  \label{4.10}
\end{equation}
Therefore, setting $\rho =e^{-\frac{1}{2}\varepsilon }<1,$ the constructed
states take the form:
\begin{equation}
\mid e^{i\theta },\varepsilon \left( \rho \right) ,0,\alpha >=\sqrt{1-\rho
^{2}}\sum\limits_{n=0}^{+\infty }\rho ^{n}e^{in\theta }\mid n,\alpha >.
\label{4.11}
\end{equation}
The latter can be considered as a phase coherent state expressed in the
basis of eigenstates of the Hamiltonian with PHO potential, whose form is
identical to the PCS in \eqref{1.3}.

\begin{proposition}
The states $\mid e^{i\theta },\varepsilon ,\gamma ,\alpha >$ satisfy the
following resolution of the identity
\begin{equation}
\lim_{\varepsilon \rightarrow 0^{+}}\int\limits_{0}^{2\pi }\mid e^{i\theta
},\varepsilon ,\gamma ,\alpha ><\alpha ,\gamma ,\varepsilon ,e^{i\theta
}\mid d\mu _{\gamma ,\varepsilon }\left( \theta \right) =\mathbf{1}%
_{L^{2}\left( \mathbb{R}_{+},dx\right) }  \label{4.12}
\end{equation}
where $\mathbf{1}_{L^{2}\left( \mathbb{R}_{+},dx\right) }$ is the identity
operator and $d\mu _{\gamma ,\varepsilon }\left( \theta \right) $ is a
measure on $\left[ 0,2\pi \right] $ with the expression
\begin{equation}
d\mu _{\gamma ,\varepsilon }\left( \theta \right) :=2^{\gamma }\frac{\Gamma
^{2}\left( \frac{1}{2}\gamma +1\right) }{\Gamma \left( \gamma +1\right) }%
\left( \sin \frac{\theta }{2}\right) ^{\gamma }\mathcal{N}_{\gamma
,\varepsilon }\left( \theta \right) \frac{d\theta }{2\pi },  \label{4.13}
\end{equation}
$\mathcal{N}_{\gamma ,\varepsilon }\left( \theta \right) $ being the
normalization factor given explicitly in \eqref{4.9}.
\end{proposition}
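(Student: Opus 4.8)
The plan is to evaluate the operator $\mathcal{O}_{\varepsilon}$ of \eqref{3.3} directly on the basis $\{\mid n;\alpha>\}$. Substituting the expansion \eqref{4.1} into the rank-one operator $\mid e^{i\theta},\varepsilon,\gamma,\alpha><\alpha,\gamma,\varepsilon,e^{i\theta}\mid$ produces a factor $\left(\mathcal{N}_{\gamma,\varepsilon}(\theta)\right)^{-1}$, and the whole point of building $\mathcal{N}_{\gamma,\varepsilon}(\theta)$ into the measure \eqref{4.13} is that this factor cancels against it. After the cancellation the $\theta$-dependence of the integrand is carried solely by the product $g_n^{\gamma}(e^{i\theta})\overline{g_m^{\gamma}(e^{i\theta})}\left(\sin\tfrac{\theta}{2}\right)^{\gamma}$, which no longer involves $\varepsilon$, while the factors $\sigma_{\gamma,\varepsilon}(n)^{-1/2}\sigma_{\gamma,\varepsilon}(m)^{-1/2}$ pull outside the angular integral. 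Thus, interchanging integration and summation, $\mathcal{O}_{\varepsilon}$ becomes a double series over $n,m$ whose coefficients are the moments $\tfrac{1}{2\pi}\int_0^{2\pi}g_n^{\gamma}(e^{i\theta})\overline{g_m^{\gamma}(e^{i\theta})}\left(\sin\tfrac{\theta}{2}\right)^{\gamma}d\theta$ times the constant $2^{\gamma}\Gamma^2(\tfrac{\gamma}{2}+1)/\Gamma(\gamma+1)$.

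Next I would recognize $2^{\gamma}\left(\sin\tfrac{\theta}{2}\right)^{\gamma}=\left|1-e^{i\theta}\right|^{\gamma}$, so that the measure restricted to $S^1$ is $d\nu(\theta):=\tfrac{\Gamma^2(\tfrac{\gamma}{2}+1)}{\Gamma(\gamma+1)}\left|1-e^{i\theta}\right|^{\gamma}\tfrac{d\theta}{2\pi}$, which is exactly the orthogonality weight of the circular Jacobi polynomials. The Eulerian beta integral $\tfrac{1}{2\pi}\int_0^{2\pi}\left|1-e^{i\theta}\right|^{\gamma}d\theta=\Gamma(\gamma+1)/\Gamma^2(\tfrac{\gamma}{2}+1)$ shows that $d\nu$ is a probability measure, which already settles the $n=m=0$ term since $g_0^{\gamma}\equiv 1$. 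For general indices I would invoke the orthogonality-and-norm relation of the $g_n^{\gamma}$ in the normalization \eqref{1.5}, namely $\int_0^{2\pi}g_n^{\gamma}\overline{g_m^{\gamma}}\,d\nu=\tfrac{(\gamma+1)_n}{n!}\delta_{nm}$, which collapses the double series to its diagonal.

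With the off-diagonal terms gone, each surviving coefficient is $\sigma_{\gamma,\varepsilon}(n)^{-1}\tfrac{(\gamma+1)_n}{n!}$; recalling from \eqref{4.2} that $\sigma_{\gamma,\varepsilon}(n)^{-1}=n!\,e^{-n\varepsilon}/(\gamma+1)_n$, the Pochhammer and factorial factors cancel and one is left with the diagonal operator $\mathcal{O}_{\varepsilon}=\sum_{n=0}^{+\infty}e^{-n\varepsilon}\mid n;\alpha><n;\alpha\mid$. The resolution of the identity then follows by an Abel-type limit: for fixed $\varphi\in L^2(\mathbb{R}_+,dx)$ with coefficients $c_n:=<n;\alpha\mid\varphi>$ one has $\mathcal{O}_{\varepsilon}[\varphi]=\sum_n e^{-n\varepsilon}c_n\mid n;\alpha>$, and since $\sum_n|c_n|^2<+\infty$ dominated convergence gives $\mathcal{O}_{\varepsilon}[\varphi]\to\sum_n c_n\mid n;\alpha>=\varphi$ in $L^2$ as $\varepsilon\to 0^{+}$; the completeness of $\{\mid n;\alpha>\}$ recorded after \eqref{2.8} is what identifies the limit with $\mathbf{1}_{L^2(\mathbb{R}_+,dx)}$, in the almost-everywhere sense spelled out after \eqref{3.3}.

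The step demanding the most care is the termwise angular integration, i.e. the Fubini--Tonelli interchange of $\int_0^{2\pi}$ with the double sum over $n,m$. This needs the integrability of $\sum_{n,m}\sigma_{\gamma,\varepsilon}(n)^{-1/2}\sigma_{\gamma,\varepsilon}(m)^{-1/2}\left|g_n^{\gamma}(e^{i\theta})\right|\left|g_m^{\gamma}(e^{i\theta})\right|\left(\sin\tfrac{\theta}{2}\right)^{\gamma}$ against $d\theta$, which for each fixed $\varepsilon>0$ is supplied by the damping $e^{-n\varepsilon}$ hidden in $\sigma_{\gamma,\varepsilon}(n)$ outweighing the at-most-polynomial growth of $g_n^{\gamma}$ on $S^1$. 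The only other nonroutine input is the orthogonality-and-norm identity for the circular Jacobi polynomials used above, which should be read off from \cite{9,10,11} or, alternatively, derived from the bilinear generating formula \eqref{4.8} already exploited in the preceding normalization computation.
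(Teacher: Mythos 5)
Your reduction of the operator to diagonal form is exactly the paper's argument: the normalization factor built into the measure cancels, the orthogonality relation of the circular Jacobi polynomials with respect to the weight $2^{\gamma}\Gamma^{2}(\tfrac{\gamma}{2}+1)\Gamma(\gamma+1)^{-1}\left(\sin\tfrac{\theta}{2}\right)^{\gamma}\tfrac{d\theta}{2\pi}$ kills the off-diagonal terms, and the surviving coefficient is $\sigma_{\gamma,\varepsilon}(n)^{-1}\tfrac{(\gamma+1)_{n}}{n!}=e^{-n\varepsilon}$, giving $\mathcal{O}_{\varepsilon}=\sum_{n}e^{-n\varepsilon}\mid n;\alpha><\alpha ;n\mid$ as in \eqref{4.26}. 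You are in fact more careful than the paper about justifying the interchange of sum and integral.

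The gap is in the last step. Definition 3.1$(ii)$, as the paper glosses it after \eqref{3.3}, stipulates that the limit in \eqref{3.2} is to be understood as $\mathcal{O}_{\varepsilon}\left[\varphi\right](u)\rightarrow\varphi(u)$ for almost every $u$, not as convergence in $L^{2}$ norm. Your dominated-convergence argument on the coefficients $c_{n}$ gives $\left\Vert \mathcal{O}_{\varepsilon}\left[\varphi\right]-\varphi\right\Vert_{L^{2}}\rightarrow 0$, which only guarantees almost-everywhere convergence along some subsequence $\varepsilon_{k}\rightarrow 0^{+}$; it does not by itself deliver the pointwise a.e.\ statement for the full limit, and your closing remark that completeness ``identifies the limit in the almost-everywhere sense'' asserts exactly what remains to be proved. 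This is where the paper does its real analytic work: it writes the kernel of the diagonal operator as $2(uv)^{\alpha-\frac{1}{2}}e^{-\frac{1}{2}(u^{2}+v^{2})}K\left(e^{-\varepsilon};u^{2},v^{2}\right)$ with $K$ the Abel--Poisson kernel of the Laguerre system \eqref{4.34} (summable in closed form by the Hille--Hardy formula), and then invokes Muckenhoupt's theorem on Poisson integrals for Laguerre expansions to conclude $\lim_{\tau\rightarrow 1^{-}}A\left[f\right](\tau,y)=f(y)$ almost everywhere. If one is content with the resolution of the identity in the strong operator topology, your argument is complete and considerably shorter; but to prove the proposition in the sense the paper itself defines, you need the Muckenhoupt (or an equivalent maximal-function) input, and that is the missing idea here.
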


\begin{proof}
Let us assume that the measure takes the form
\begin{equation}
d\mu _{\gamma ,\varepsilon }\left( \theta \right) =\mathcal{N}_{\gamma
,\varepsilon }\left( \theta \right) \Omega _{\gamma }\left( \theta \right)
d\theta   \label{4.14}
\end{equation}
where $\Omega _{\gamma }\left( \theta \right) $ is an auxiliary density to
be determined. Let $\varphi \in L^{2}\left( \mathbb{R}_{+},dx\right) $ and
let us start by writing the following action
\begin{align}
\mathcal{O}_{\gamma ,\varepsilon }\left[ \varphi \right] &:=\left(
\int\limits_{0}^{2\pi }\mid e^{i\theta },\varepsilon ,\gamma ,\alpha
><\alpha ,\gamma ,\varepsilon ,e^{i\theta }\mid d\mu _{\gamma ,\varepsilon
}\left( \theta \right) \right) \left[ \varphi \right]   \label{4.15}
\\
&=\int\limits_{0}^{2\pi }<\varphi \mid e^{i\theta };\varepsilon ,\gamma
,\alpha ><\alpha ,\gamma ,\varepsilon ,e^{i\theta }\mid d\mu _{\gamma
,\varepsilon }\left( \theta \right)  . \label{4.16}
\end{align}
Making use Eq. \eqref{4.1}, we obtain successively
\begin{align}
\mathcal{O}_{\gamma ,\varepsilon }\left[ \varphi \right] &=\int\limits_{0}^{2%
\pi }<\varphi \mid \left( \mathcal{N}_{\gamma ,\varepsilon }\left( \theta
\right) \right) ^{-\frac{1}{2}}\sum\limits_{n=0}^{+\infty }\frac{\overline{%
g_{n}^{\gamma }\left( e^{i\theta }\right) }}{\sqrt{\sigma _{\gamma
,\varepsilon }\left( n\right) }}\mid n;\alpha >><\alpha ,\gamma ,\varepsilon
,e^{i\theta }\mid d\mu _{\gamma ,\varepsilon }\left( \theta \right)
\label{4.17}\\
&=\int\limits_{0}^{2\pi }\sum\limits_{n=0}^{+\infty }\frac{\overline{%
g_{n}^{\gamma }\left( e^{i\theta }\right) }}{\sqrt{\sigma _{\gamma
,\varepsilon }\left( n\right) }}<\varphi \mid n;\alpha ><\alpha ,\gamma
,\varepsilon ,e^{i\theta }\mid \left( \mathcal{N}_{\gamma ,\varepsilon
}\left( \theta \right) \right) ^{-\frac{1}{2}}d\mu _{\gamma ,\varepsilon
}\left( \theta \right)   \label{4.18}
\\
&=\left( \sum\limits_{m,n=0}^{+\infty }\int\limits_{0}^{2\pi }\frac{\overline{%
g_{n}^{\gamma }\left( e^{i\theta }\right) }g_{m}^{\gamma }\left( e^{i\theta
}\right) }{\sqrt{\sigma _{\gamma ,\varepsilon }\left( n\right) }\sqrt{\sigma
_{\gamma ,\varepsilon }\left( m\right) }}\mid n;\alpha ><\alpha ;m\mid
\left( \mathcal{N}_{\gamma ,\varepsilon }\left( \theta \right) \right)
^{-1}d\mu _{\gamma ,\varepsilon }\left( \theta \right) \right) \left[
\varphi \right] .  \label{4.19}
\end{align}
Replace $d\mu _{\gamma ,\varepsilon }\left( \theta \right) =\mathcal{N}%
_{\gamma ,\varepsilon }\left( \theta \right) \Omega _{\gamma }\left( \theta
\right) d\theta ,$ then Eq. \eqref{4.19} takes the form
\begin{equation}
\mathcal{O}_{\gamma ,\alpha ,\varepsilon }=\sum\limits_{m,n=0}^{+\infty }%
\left[ \int\limits_{0}^{2\pi }\frac{\overline{g_{n}^{\gamma }\left(
e^{i\theta }\right) }g_{m}^{\gamma }\left( e^{i\theta }\right) }{\sqrt{%
\sigma _{\gamma ,\varepsilon }\left( n\right) }\sqrt{\sigma _{\gamma
,\varepsilon }\left( m\right) }}\Omega _{\gamma }\left( \theta \right)
d\theta \right] \mid n;\alpha ><\alpha ;m\mid .  \label{4.20}
\end{equation}
Then, we need to consider the integral
\begin{equation}
I_{n,m}\left( \gamma ,\varepsilon \right) :=\frac{1}{\sqrt{\sigma _{\gamma
,\varepsilon }\left( n\right) }\sqrt{\sigma _{\gamma ,\varepsilon }\left(
m\right) }}\int\limits_{0}^{2\pi }\overline{g_{n}^{\gamma }\left( e^{i\theta
}\right) }g_{m}^{\gamma }\left( e^{i\theta }\right) \Omega _{\gamma }\left(
\theta \right) d\theta .  \label{4.21}
\end{equation}
We recall the orthogonality relations of circular Jacobi polynomials \cite[p.875]{10}):
\begin{equation}
2^{\gamma }\frac{\Gamma ^{2}\left( \frac{\gamma }{2}+1\right) }{2\pi }%
\int\limits_{0}^{2\pi }\overline{g_{n}^{\gamma }\left( e^{i\theta }\right) }%
g_{m}^{\gamma }\left( e^{i\theta }\right) \left( \sin \frac{\theta }{2}%
\right) ^{\gamma }dx=\frac{\Gamma \left( n+\gamma +1\right) }{n!}\delta
_{n,m}.  \label{4.22}
\end{equation}
This suggests us to set
\begin{equation}
\Omega _{\gamma }\left( \theta \right) :=2^{\gamma }\frac{\Gamma ^{2}\left(
\frac{\gamma }{2}+1\right) }{\Gamma \left( \gamma +1\right) }\left( \sin
\frac{\theta }{2}\right) ^{\gamma }\frac{d\theta }{2\pi }.  \label{4.23}
\end{equation}
Therefore, \eqref{4.21} reduces to
\begin{equation}
I_{n,m}\left( \gamma ,\varepsilon \right) =e^{-\varepsilon n}\frac{m!\Gamma
\left( n+\gamma +1\right) }{n!\Gamma \left( m+\gamma +1\right) }\delta _{n,m}
\label{4.24}
\end{equation}
which means that the operator in \eqref{4.15} takes the form:
\begin{align}
\mathcal{O}_{\gamma ,\alpha ,\varepsilon }\equiv \mathcal{O}_{\alpha
,\varepsilon }&=\sum\limits_{n,m=0}^{+\infty }e^{-n\varepsilon }\frac{%
m!\Gamma \left( n+\gamma +1\right) }{n!\Gamma \left( m+\gamma +1\right) }%
\delta _{n,m}\mid n;\alpha ><\alpha ;m\mid   \label{4.25}\\
&=\sum\limits_{m=0}^{+\infty }e^{-m\varepsilon }\mid m;\alpha ><\alpha ;m\mid
.  \label{4.26}
\end{align}
Thus, we arrive at \qquad\
\begin{equation}
\mathcal{O}_{\alpha ,\varepsilon }\left[ \varphi \right] =\sum%
\limits_{m=0}^{+\infty }e^{-m\varepsilon }\left( \mid m;\alpha ><\alpha
;m\mid \right) \left[ \varphi \right] .  \label{4.27}
\end{equation}
For $u\in \mathbb{R}_{+},$ we can write
\begin{align}
\mathcal{O}_{\alpha ,\varepsilon }\left[ \varphi \right] \left( u\right)
&=\sum\limits_{m=0}^{+\infty }e^{-m\varepsilon }<\varphi \mid m;\alpha
><u\mid m;\alpha >  \label{4.28}
\\ &
=\sum\limits_{m=0}^{+\infty }e^{-m\varepsilon }\left(
\int\limits_{0}^{+\infty }\varphi \left( v\right) \overline{<v\mid m;\alpha >%
}dv\right) <u\mid m;\alpha >  \label{4.29}
\\ &
=\int\limits_{0}^{+\infty }\varphi \left( v\right) \left(
\sum\limits_{m=0}^{+\infty }e^{-m\varepsilon }\overline{<v\mid m;\alpha >}%
<u\mid m;\alpha >\right) dv . \label{4.30}
\end{align}
We are then lead to calculate the sum
\begin{equation}
\mathcal{G}_{\varepsilon }^{\alpha }\left( u,v\right)
:=\sum\limits_{m=0}^{+\infty }e^{-m\varepsilon }\overline{<v\mid m;\alpha >}%
<u\mid m;\alpha >.  \label{4.31}
\end{equation}
For this we recall the expression of the eigenstate $\mid m;\alpha >$ in 
\eqref{2.8}. So that the above sum in \eqref{4.31} reads
\begin{eqnarray}
\mathcal{G}_{\varepsilon }^{\alpha }\left( u,v\right)  = 2\left( vu\right)
^{\alpha -\frac{1}{2}}e^{-\frac{1}{2}\left( u^{2}+v^{2}\right) }
 \sum\limits_{m=0}^{+\infty }e^{-m\varepsilon }\frac{m!}{\Gamma
\left( m+\alpha \right) }L_{m}^{\left( \alpha -1\right) }\left( u^{2}\right)
L_{m}^{\left( \alpha -1\right) }\left( v^{2}\right) .  \label{4.32}
\end{eqnarray}
Eq.\eqref{4.32} can be rewritten as
\begin{equation}
\mathcal{G}_{\varepsilon }^{\alpha }\left( u,v\right) =2\left( uv\right)
^{\alpha -\frac{1}{2}}e^{-\frac{1}{2}\left( u^{2}+v^{2}\right) }K\left(
e^{-\varepsilon };u^{2},v^{2}\right)   \label{4.33}
\end{equation}
where we have introduced the kernel function
\begin{equation}
K\left( \tau ;\xi ,\zeta \right) :=\sum\limits_{m=0}^{+\infty }\tau ^{m}%
\frac{m!}{\Gamma \left( m+\alpha \right) }L_{m}^{\left( \alpha -1\right)
}\left( \xi \right) L_{m}^{\left( \alpha -1\right) }\left( \zeta \right)
; \quad  0<\tau <1.  \label{4.34}
\end{equation}
The latter can be written in a closed form by applying the Hille-Hardy
formula \cite{17}. Now, returning back to Eq. \eqref{4.30}
and taking into account Eq.\eqref{4.33}, we get that

\begin{equation}
\mathcal{O}_{\alpha ,\varepsilon }\left[ \varphi \right] \left( u\right)
=2u^{\alpha -\frac{1}{2}}e^{-\frac{1}{2}u^{2}}\int\limits_{0}^{+\infty
}v^{\alpha -\frac{1}{2}}e^{-\frac{1}{2}v^{2}}K\left( e^{-\varepsilon
},u^{2},v^{2}\right) \varphi \left( v\right) dv . \label{4.35}
\end{equation}
Next,we split the right hand side of Eq.\eqref{4.35}
\begin{equation}
\mathcal{O}_{\alpha ,\varepsilon }\left[ \varphi \right] \left( u\right)
=\vartheta _{\alpha }\left( u\right) M\left[ \varphi \right] \left( u\right)
,  \label{4.36}
\end{equation}
where
\begin{equation}
M\left[ \varphi \right] \left( u\right) =\frac{1}{2}\int\limits_{0}^{+\infty
}K\left( \tau ;w,s\right) h\left( s\right) s^{\alpha -1}e^{-s}ds,  \label{4.37}
\end{equation}
with $\tau =e^{-\varepsilon }$, $w=u^{2}$ and
\begin{equation}
h\left( s\right) :=s^{-\frac{1}{2}\alpha +\frac{1}{4}}e^{\frac{1}{2}%
s}\varphi \left( \sqrt{s}\right) .  \label{4.38}
\end{equation}
By direct calculations, one can check that $h\in L^{2}\left( \mathbb{R}%
_{+},s^{\alpha -1}e^{-s}ds\right) .$ Precisely, we have that
\begin{equation}
\left\| h\right\| _{L^{2}\left( \mathbb{\,R}_{+},s^{\alpha
-1}e^{-s}ds\right) }^{2}=2\left\| \varphi \right\| _{L^{2}\left( \mathbb{R}%
_{+}\right) }  \label{4.39}
\end{equation}
We now apply the result of B. Muckenhoopt \cite{22} who considered
the Poisson integral of a function $f\in L^{p}\left( \mathbb{R}^{+},s^{\eta
}e^{-s}ds\right) ,\eta >-1,1\leq p\leq +\infty $ defined by
\begin{equation}
A\left[ f\right] \left( \tau ,w\right) :=\int\limits_{0}^{+\infty }K\left(
\tau ,w,s\right) f\left( s\right) s^{\eta }e^{-s}ds; \quad 0<\tau <1  \label{4.40}
\end{equation}
with the kernel $K\left( \tau ,\bullet ,\bullet \right) $ defined as in \eqref{4.34}. He proved that $\lim_{\tau \rightarrow 1^{-}}A\left[ f%
\right] \left( \tau ,y\right) =f\left( y\right) $ almost everywhere in $%
\left[ 0,+\infty \right[ ,1\leq p\leq \infty .$ We apply this result in the
case $p=2,f=h$ and $A\equiv M$ to obtain that
\begin{equation}
M\left[ \varphi \right] \left( u\right) \rightarrow 2^{-1}h\left(
u^{2}\right) =2^{-1}u^{-\alpha +\frac{1}{2}}e^{\frac{1}{2}u^{2}}\varphi
\left( u\right) .  \label{4.41}
\end{equation}
Recalling that $\tau =e^{-\varepsilon },$ we get that
\begin{equation}
\mathcal{O}_{\alpha ,\varepsilon }\left[ \varphi \right] \left( u\right)
=\vartheta _{\alpha }\left( u\right) M\left[ \varphi \right] \left( u\right)
\rightarrow \varphi \left( u\right) \quad \mbox{ as } \quad \varepsilon \rightarrow 0^{+}
\label{4.42}
\end{equation}
which means that
\begin{equation}
\lim_{\varepsilon \rightarrow 0^{+}}\int\limits_{0}^{2\pi }\mid e^{i\theta
},\varepsilon ,\gamma ,\alpha ><e^{i\theta },\varepsilon ,\gamma ,\alpha
\mid d\mu _{\gamma ,\varepsilon }\left( \theta \right) =\mathbf{1}%
_{L^{2}\left( \mathbb{R}_{+},dx\right) .}  \label{4.43}
\end{equation}
This ends the proof.
\end{proof}

\section{ A close form for the GPCS wavefunctions}

Now, we assume that the parameter $\gamma $ occurring in the definition of
the circular Jacobi polynomials $g_{n}^{\gamma }\left( e^{i\theta }\right) $
in \eqref{1.5} is connected to the parameter $\alpha $ controlling
the singular part $ax^{-2}$ of the Hamiltonian $\Delta _{a}$ in \eqref{1.4} by $\alpha =\gamma +1$ which means we are taking $\gamma =\frac{%
1}{2}\sqrt{1+4a}.$ Then, we can establish a closed form for the constructed
GPCS as follows.

\begin{proposition}
Let $\gamma =\frac{1}{2}\sqrt{1+4a}$ and $\varepsilon >0$ be fixed
parameters. Then, the wavefunctions of the states $\mid e^{i\theta
},\varepsilon ,\gamma >$ defined in \eqref{4.1} can be written in a
closed form as $<x\mid e^{i\theta },\varepsilon ,\gamma >=$\newline
\begin{equation}
\frac{\sqrt{2}\left( 1-e^{-\frac{1}{2}\varepsilon +i\theta }\right)
^{-1}x^{\gamma +\frac{1}{2}}\exp \left( -\frac{x^{2}}{2}\coth \frac{%
\varepsilon }{4}\right) ._{1}\digamma _{1}\left( 1+\frac{\gamma }{2}%
,1+\gamma ;\frac{\left( 1-e^{i\theta }\right) e^{-\frac{1}{2}\varepsilon
}x^{2}}{\left( 1-e^{-\frac{1}{2}\varepsilon }\right) \left( 1-e^{-\frac{1}{2}%
\varepsilon +i\theta }\right) }\right) }{\sqrt{\Gamma \left( \gamma
+1\right) }\left( \left( 1-e^{-\frac{1}{2}\varepsilon }\right) \left( 1-e^{-%
\frac{1}{2}\varepsilon +i\theta }\right) \right) ^{\frac{\gamma }{2}}\sqrt{%
\frac{\left( 1-e^{-\varepsilon }\right) {}{}}{\left| 1-e^{-\varepsilon
+i\theta }\right| ^{2+\gamma }}{}{}{}{}_{2}F_{1}(\frac{\gamma }{2}+1,\frac{%
\gamma }{2}+1,\gamma +1;\frac{e^{-\varepsilon }\left| 1-e^{i\theta }\right|
^{2}}{\left| 1-e^{-\varepsilon +i\theta }\right| ^{2}})}}  \label{5.1}
\end{equation}
for every $x\in \mathbb{R}_{+}$.
\end{proposition}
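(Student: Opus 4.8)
The plan is to substitute the explicit eigenfunctions \eqref{2.8} (with $\alpha=\gamma+1$) into the defining series \eqref{4.1}, collapse the coefficient to a single hypergeometric factor, reduce the remaining sum by means of formula \eqref{4.8}, and finally divide by $\left(\mathcal{N}_{\gamma,\varepsilon}(\theta)\right)^{1/2}$ from \eqref{4.9}. First I would pair the coefficient $\left(\mathcal{N}_{\gamma,\varepsilon}(\theta)\right)^{-1/2}g_n^\gamma(e^{i\theta})/\sqrt{\sigma_{\gamma,\varepsilon}(n)}$ of $\mid n;\gamma+1>$ with the radial factor $\left(2n!/\Gamma(\gamma+1+n)\right)^{1/2}x^{\gamma+\frac12}e^{-x^2/2}L_n^{(\gamma)}(x^2)$ read off from \eqref{2.8}. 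Using \eqref{4.2}, the definition \eqref{1.5}, and $(\gamma+1)_n=\Gamma(\gamma+1+n)/\Gamma(\gamma+1)$, every Pochhammer and Gamma factor cancels, and the coefficient reduces to the clean expression $\sqrt{2/\Gamma(\gamma+1)}\,e^{-n\varepsilon/2}\,{}_2F_1\!\left(-n,\frac{\gamma}{2}+1,\gamma+1,1-e^{i\theta}\right)$.

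The whole problem is then reduced to evaluating
\[
S:=\sum_{n=0}^{+\infty}t^{n}\,{}_2F_1\!\left(-n,\tfrac{\gamma}{2}+1,\gamma+1,1-e^{i\theta}\right)L_n^{(\gamma)}\!\left(x^2\right),\qquad t:=e^{-\varepsilon/2},
\]
and this is exactly where \eqref{4.8} enters. I would obtain the required summation as a confluent limit of \eqref{4.8}: keeping $a=\frac{\gamma}{2}+1$, $c=\gamma+1$, $\xi=1-e^{i\theta}$ and $r=t$ fixed while letting the second upper index $b\to+\infty$ with $\zeta=x^2/b\to0$, one has ${}_2F_1(-n,b,c,\zeta)\to{}_1\digamma_1(-n,c,x^2)=\frac{n!}{(c)_n}L_n^{(\gamma)}(x^2)$, so the factor $(c)_n/n!$ in \eqref{4.8} is cancelled and its left-hand side becomes precisely $S$. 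On the right-hand side $(1-r+\zeta r)^{-b}\to e^{-t x^2/(1-t)}$ and the trailing ${}_2F_1$ collapses to a ${}_1\digamma_1$, giving
\[
S=(1-t)^{-\gamma/2}\left(1-te^{i\theta}\right)^{-\left(\frac{\gamma}{2}+1\right)}\exp\!\left(-\frac{t\,x^2}{1-t}\right)\,{}_1\digamma_1\!\left(1+\tfrac{\gamma}{2},1+\gamma;\frac{(1-e^{i\theta})\,t\,x^2}{(1-t)\left(1-te^{i\theta}\right)}\right),
\]
in which $1-te^{i\theta}=1-e^{-\varepsilon/2+i\theta}=1-r+\xi r$.

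To finish, I would merge the prefactor $e^{-x^2/2}$ from \eqref{2.8} with $\exp(-t x^2/(1-t))$ from $S$; since $\frac12+\frac{t}{1-t}=\frac12\cdot\frac{1+t}{1-t}=\frac12\coth\frac{\varepsilon}{4}$, this produces the factor $\exp\!\left(-\frac{x^2}{2}\coth\frac{\varepsilon}{4}\right)$ appearing in \eqref{5.1}. Splitting $\left(1-te^{i\theta}\right)^{-(\gamma/2+1)}=\left(1-te^{i\theta}\right)^{-1}\left(1-te^{i\theta}\right)^{-\gamma/2}$ and grouping $\left(1-te^{i\theta}\right)^{-\gamma/2}$ with $(1-t)^{-\gamma/2}$ reproduces the algebraic prefactor $\left((1-e^{-\varepsilon/2})(1-e^{-\varepsilon/2+i\theta})\right)^{-\gamma/2}$ of \eqref{5.1}, while dividing by $\left(\mathcal{N}_{\gamma,\varepsilon}(\theta)\right)^{1/2}$ from \eqref{4.9} supplies the last square-rooted denominator. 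The main obstacle is the evaluation of $S$: one either cites a bilinear generating relation pairing a single ${}_2F_1(-n,\cdot)$ with a Laguerre polynomial, or derives it as the confluent limit of \eqref{4.8} above. In the latter route the only genuine point is justifying the interchange of the $b\to+\infty$ limit with the sum over $n$, which is legitimate because for fixed $t<1$ the series converges absolutely and locally uniformly, so the limit may be taken term by term. Everything else is routine bookkeeping of Gamma factors and algebraic regrouping.
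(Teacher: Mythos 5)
Your proposal is correct and follows essentially the same route as the paper: substitute the eigenfunctions \eqref{2.8} with $\alpha=\gamma+1$ into \eqref{4.1}, observe that all Pochhammer/Gamma factors cancel to leave the coefficient $\sqrt{2/\Gamma(\gamma+1)}\,e^{-n\varepsilon/2}$, and evaluate the remaining sum $\sum_{n}\tau^{n}\,{}_{2}F_{1}(-n,\tfrac{\gamma}{2}+1,\gamma+1,1-e^{i\theta})\,L_{n}^{(\gamma)}(x^{2})$ by a bilateral generating relation before dividing by $\sqrt{\mathcal{N}_{\gamma,\varepsilon}(\theta)}$. The only divergence is that the paper cites this generating relation directly from Rainville (its Eq.\ \eqref{5.8}), whereas you rederive it as the confluent $b\to+\infty$, $\zeta=x^{2}/b$ limit of \eqref{4.8} — a self-contained and correctly justified variant that yields the identical closed form, including the $\coth\frac{\varepsilon}{4}$ recombination $\frac12+\frac{\tau}{1-\tau}=\frac12\,\frac{1+\tau}{1-\tau}$.
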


\begin{proof}
We start by writing the expression of the wave function of states $\mid
e^{i\theta },\varepsilon ,\gamma >$ according to Definition \eqref{4.1} as
\begin{equation}
<x\mid e^{i\theta },\varepsilon ,\gamma >=\left( \mathcal{N}_{\gamma
,\varepsilon }\left( \theta \right) \right) ^{-\frac{1}{2}%
}\sum\limits_{n=0}^{+\infty }\frac{g_{n}^{\gamma }\left( e^{i\theta }\right)
}{\sqrt{\sigma _{\gamma ,\varepsilon }\left( n\right) }}<x\mid n;\gamma
+1>; \quad x\in \mathbb{R}_{+}.  \label{5.2}
\end{equation}
We have thus to look for a closed form of the series
\begin{equation}
\mathcal{S}\left( x\right) :=\sum\limits_{n=0}^{+\infty }\frac{g_{n}^{\gamma
}\left( e^{i\theta }\right) }{\sqrt{\sigma _{\gamma ,\varepsilon }\left(
n\right) }}<x\mid n;\gamma +1>  \label{5.3}
\end{equation}
which also reads
\begin{equation}
\mathcal{S}\left( x\right) =\sum\limits_{n=0}^{+\infty }\frac{\left( \gamma
+1\right) _{n}}{n!\sqrt{\sigma _{\gamma ,\varepsilon }\left( n\right) }}{}%
_{2}F_{1}(-n,\frac{\gamma }{2}+1,\gamma +1,1-e^{i\theta })<x\mid n;\gamma
+1>.  \label{5.4}
\end{equation}
Replacing $\sigma _{\gamma ,\varepsilon }\left( n\right) $ and\ $<x\mid
n;\alpha >$ by their expressions in \eqref{4.1} and \eqref{2.8} respectively
\begin{equation}
\mathcal{S}\left( x\right) =\frac{\sqrt{2}x^{\gamma +\frac{1}{2}}e^{-\frac{1%
}{2}x}}{\sqrt{\Gamma \left( \gamma +1\right) }}\sum\limits_{n=0}^{+\infty
}e^{-\frac{1}{2}n\varepsilon }._{2}F_{1}(-n,\frac{\gamma }{2}+1,\gamma
+1,1-e^{i\theta })L_{n}^{\left( \gamma \right) }\left( x^{2}\right) .
\label{5.5}
\end{equation}
Put $\tau :=e^{-\frac{1}{2}\varepsilon },\left| \tau \right| <1.$ Then Equation
\eqref{5.5} becomes
\begin{equation}
\mathcal{S}\left( x\right) =\frac{\sqrt{2}x^{\gamma +\frac{1}{2}}e^{-\frac{1%
}{2}x^{2}}}{\sqrt{\Gamma \left( \gamma +1\right) }}\frak{S}\left( x\right),
\label{5.6}
\end{equation}
where
\begin{equation}
\frak{S}\left( x\right) :=\sum\limits_{n=0}^{+\infty }\tau ^{n}\text{ }%
_{2}\digamma _{1}(-n,\frac{\gamma }{2}+1,\gamma +1,1-e^{i\theta
})L_{n}^{\left( \gamma \right) }\left( x^{2}\right)  . \label{5.7}
\end{equation}
Next, we use the bilateral generating formula \cite[p.213]{23}:
\begin{equation}
\sum\limits_{n=0}^{+\infty }t^{n}\text{ }_{2}\digamma _{1}(-n,c,1+\nu
;y)L_{n}^{\left( \nu \right) }\left( u\right) =\left( 1-t\right) ^{-1+c-\nu
}\left( 1-t+yt\right) ^{-c}  \label{5.8}
\end{equation}
\begin{equation*}
\times \exp \left( \frac{-ut}{1-t}\right) \text{ }_{1}\digamma _{1}\left(
c,1+\nu ,\frac{yut}{\left( 1-t\right) \left( 1-t+yt\right) }\right)
\end{equation*}
for $t=\tau ,c=\frac{\gamma }{2}+1,y=1-e^{i\theta },\nu =\gamma $ and $u=x^{2},$ we obtain
\begin{eqnarray}
\frak{S}\left( x\right)  &=&\left( 1-\tau \right) ^{-\frac{1}{2}\gamma
}\left( 1-\tau e^{i\theta }\right) ^{-1-\frac{1}{2}\gamma }  \label{5.9} \\
&&\times \exp \left( \frac{-\tau x^{2}}{1-\tau }\right) ._{1}\digamma
_{1}\left( 1+\frac{\gamma }{2},1+\gamma ,\frac{\left( 1-e^{i\theta }\right)
\tau x^{2}}{\left( 1-\tau \right) \left( 1-\tau e^{i\theta }\right) }\right).
\notag
\end{eqnarray}
Summarizing up the above calculations as
\begin{equation}
<x\mid e^{i\theta },\varepsilon ,\gamma >=\left( \mathcal{N}_{\gamma
,\varepsilon }\left( \theta \right) \right) ^{-\frac{1}{2}}\frac{\sqrt{2}%
x^{\gamma +\frac{1}{2}}e^{-\frac{1}{2}x^{2}}}{\sqrt{\Gamma \left( \gamma
+1\right) }}\frak{S}\left( x\right) ,  \label{5.10}
\end{equation}
we arrive at
\begin{align}
<x\mid e^{i\theta },\varepsilon ,\gamma >&=\frac{\sqrt{2}x^{\gamma +\frac{1%
}{2}}e^{-\frac{1}{2}x^{2}}\left( 1-\tau \right) ^{-\frac{1}{2}\gamma }}{%
\sqrt{\Gamma \left( \gamma +1\right) }\left( \mathcal{N}_{\gamma
,\varepsilon }\left( \theta \right) \right) ^{\frac{1}{2}}\left( 1-\tau
e^{i\theta }\right) ^{1+\frac{1}{2}\gamma }}  \label{5.11} \\
&\times \exp \left( \frac{-\tau x^{2}}{1-\tau }\right) ._{1}\digamma
_{1}\left( 1+\frac{\gamma }{2},1+\gamma ,\frac{\left( 1-e^{i\theta }\right)
\tau x^{2}}{\left( 1-\tau \right) \left( 1-\tau e^{i\theta }\right) }\right)
\nonumber
\end{align}
Finally, we replace $\tau $ by $e^{-\frac{1}{2}\varepsilon }$ and the factor
$\mathcal{N}_{\gamma ,\varepsilon }\left( \theta \right) $ by its expression
in \eqref{4.3} to obtain \eqref{5.1}.
\end{proof}

Naturally, once we have obtained a closed \ form for the GPCS $\mid
e^{i\theta },\varepsilon ,\gamma >$ we can look for the associated coherent
state transform. In view of \eqref{4.1}, this transform should map
the space $L^{2}\left( \mathbb{R}_{+},dx\right) $ spanned by the eigenstates
$\mid n;\gamma +1>$ of the Hamiltonian in $\Delta _{a}$ onto the space in
which the coefficients $g_{n}^{\gamma }\left( e^{i\theta }\right) $ are
orthogonal, that is the space $L^{2}\left( S^{1},d\sigma _{\gamma
}\right) $ with $d\sigma _{\gamma }:=\frac{1}{2\pi }\sin ^{\gamma }\frac{%
\theta }{2}d\theta $. It should also obey the general form: $\varphi \mapsto
\sqrt{\mathcal{N}_{\gamma ,\varepsilon }\left( \theta \right) }\left\langle
e^{i\theta },\varepsilon ,\gamma \mid \varphi \right\rangle .$ Recalling
that the resolution of the identity, which usually ensures the isometry
property of such map, was obtained at the limit $\varepsilon \rightarrow
0^{+}$ in \eqref{4.12}, then a convenient definition for such transform could be
as follows.

\begin{definition}
Let $\gamma =\frac{1}{2}\sqrt{1+4a}$ be a fixed parameter. The coherent
state transform associated with the GPCS in \eqref{4.1} is the map
\begin{equation}
\mathcal{W}_{\gamma }:L^{2}\left( \mathbb{R}_{+},dx\right) \rightarrow
L^{2}\left( S^{1},d\sigma _{\gamma }\right)   \label{5.12}
\end{equation}
defined by
\begin{equation*}
\mathcal{W}_{\gamma }\left[ \varphi \right] \left( e^{i\theta }\right)
:=\lim_{\varepsilon \rightarrow 0^{+}}\int\limits_{0}^{+\infty }\frac{\sqrt{2%
}}{\sqrt{\Gamma \left( \gamma +1\right) }}x^{\gamma +\frac{1}{2}}\left(
1-e^{-\frac{1}{2}\varepsilon }\right) ^{-\frac{1}{2}\gamma }\left( 1-e^{-%
\frac{1}{2}\varepsilon +i\theta }\right) ^{-1-\frac{1}{2}\gamma }
\end{equation*}

\begin{equation*}
\times \exp \left( -\frac{1}{2}x^{2}\coth \frac{\varepsilon }{4}\right)
._{1}\digamma _{1}\left( 1+\frac{\gamma }{2},1+\gamma ,\frac{\left(
1-e^{i\theta }\right) e^{-\frac{1}{2}\varepsilon }x^{2}}{\left( 1-e^{-\frac{1%
}{2}\varepsilon }\right) \left( 1-e^{-\frac{1}{2}\varepsilon +i\theta
}\right) }\right) \overline{\varphi \left( x\right) }dx.
\end{equation*}
\end{definition}

In fact, this definition provides us with a new way of looking at the
circular Jacobi polynomials. Indeed, we establish the following precise fact.

\begin{proposition}
Let $\gamma =\frac{1}{2}\sqrt{1+4a}$ be a fixed parameter. Then, the
transform $\mathcal{W}_{\gamma }$ defined in (5.12) satisfies:
\begin{equation}
\mathcal{W}_{\gamma }\left[ x\mapsto \left\langle x\mid n;\gamma 
+1\right\rangle \right] \left( e^{i\theta }\right) =\frac{\sqrt{n!}}{\sqrt{%
\left( \gamma +1\right) _{n}}}g_{n}^{\gamma }\left( e^{i\theta }\right) ,
\label{5.13}
\end{equation}
for all $e^{i\theta }\in S^{1}.$ In other words, the normalized circular
Jacobi polynomials are the images of eigenstates of the Hamiltonian $\Delta
_{a}$ under the coherent states transform $\mathcal{W}_{\gamma }.$
\end{proposition}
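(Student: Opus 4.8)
The plan is to bypass the closed-form kernel altogether and work with the series representation \eqref{4.1} of the coherent states. The first observation is that the integral kernel written out in Definition \eqref{5.12} is exactly $\sqrt{\mathcal{N}_{\gamma,\varepsilon}(\theta)}\,\langle x\mid e^{i\theta},\varepsilon,\gamma\rangle$: this is precisely the quantity produced by the computation of the preceding proposition, after the factor $e^{-\frac12 x^2}\exp\!\left(-\tau x^2/(1-\tau)\right)$ was collapsed into $\exp\!\left(-\frac12 x^2\coth\frac{\varepsilon}{4}\right)$ by means of $\frac{1+\tau}{1-\tau}=\coth\frac{\varepsilon}{4}$ with $\tau=e^{-\varepsilon/2}$. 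Thus I would rewrite the transform as
\[
\mathcal{W}_\gamma[\varphi](e^{i\theta})=\lim_{\varepsilon\to 0^+}\int_0^{+\infty}\sqrt{\mathcal{N}_{\gamma,\varepsilon}(\theta)}\,\langle x\mid e^{i\theta},\varepsilon,\gamma\rangle\,\overline{\varphi(x)}\,dx.
\]

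Next I would substitute the series \eqref{4.1} for $\langle x\mid e^{i\theta},\varepsilon,\gamma\rangle$. The prefactor $\sqrt{\mathcal{N}_{\gamma,\varepsilon}(\theta)}$ cancels against the $(\mathcal{N}_{\gamma,\varepsilon}(\theta))^{-1/2}$ carried by the coherent state, so for each fixed $\varepsilon>0$ the kernel reduces to $\sum_{m\ge 0}\sigma_{\gamma,\varepsilon}(m)^{-1/2}\,g_m^\gamma(e^{i\theta})\,\langle x\mid m;\gamma+1\rangle$. Taking $\varphi(x)=\langle x\mid n;\gamma+1\rangle$, which is real, and integrating term by term against it, the orthonormality of the basis $\{\langle\cdot\mid m;\gamma+1\rangle\}$ recalled after \eqref{2.8} kills every summand except $m=n$, leaving the explicit scalar $\sigma_{\gamma,\varepsilon}(n)^{-1/2}\,g_n^\gamma(e^{i\theta})$.

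Finally I would insert $\sigma_{\gamma,\varepsilon}(n)=(n!)^{-1}(\gamma+1)_n e^{n\varepsilon}$ from \eqref{4.2}, which gives $\sigma_{\gamma,\varepsilon}(n)^{-1/2}=\sqrt{n!}\,(\gamma+1)_n^{-1/2}\,e^{-n\varepsilon/2}$, and pass to the limit. Since only the scalar $e^{-n\varepsilon/2}\to 1$ depends on $\varepsilon$, the limit is immediate and produces $\sqrt{n!}\,(\gamma+1)_n^{-1/2}\,g_n^\gamma(e^{i\theta})$, which is exactly \eqref{5.13}.

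The only point demanding genuine care — and the one I expect to be the main, though mild, obstacle — is the term-by-term integration for fixed $\varepsilon$. It is justified by noting that for $\varepsilon>0$ the kernel is $\sqrt{\mathcal{N}_{\gamma,\varepsilon}(\theta)}$ times the normalized vector $\langle\cdot\mid e^{i\theta},\varepsilon,\gamma\rangle$ of $L^2(\mathbb{R}_+,dx)$, so the series converges there in norm and may be paired with $\varphi\in L^2(\mathbb{R}_+,dx)$ coefficient by coefficient by continuity of the inner product. Crucially, no interchange of the $\varepsilon$-limit with the integral is required: the integral is evaluated exactly for each $\varepsilon$, and the limit $\varepsilon\to 0^+$ then acts only on the elementary factor $e^{-n\varepsilon/2}$.
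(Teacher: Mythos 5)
Your proof is correct, but it takes a genuinely different route from the paper's. You identify the integral kernel in Definition 5.2 as $\sqrt{\mathcal{N}_{\gamma ,\varepsilon }\left( \theta \right) }\,\langle x\mid e^{i\theta },\varepsilon ,\gamma \rangle$ --- i.e.\ as $\sqrt{\mathcal{N}_{\gamma ,\varepsilon }}$ times the closed form \eqref{5.1}, using $e^{-\frac{1}{2}x^{2}}\exp \left( -\tau x^{2}/(1-\tau )\right) =\exp \left( -\frac{1}{2}x^{2}\coth \frac{\varepsilon }{4}\right)$ for $\tau =e^{-\frac{1}{2}\varepsilon }$ --- then undo the closed form back to the series \eqref{4.1}, so that orthonormality of $\left\{ \mid m;\gamma +1>\right\}$ isolates the single coefficient $\left( \sigma _{\gamma ,\varepsilon }\left( n\right) \right) ^{-\frac{1}{2}}g_{n}^{\gamma }\left( e^{i\theta }\right) =\sqrt{n!}\left( \gamma +1\right) _{n}^{-\frac{1}{2}}e^{-\frac{1}{2}n\varepsilon }g_{n}^{\gamma }\left( e^{i\theta }\right)$, whose limit as $\varepsilon \rightarrow 0^{+}$ is elementary. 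The paper instead keeps the kernel in closed form and evaluates the integral \eqref{5.16} directly: it converts $L_{n}^{\left( \gamma \right) }$ to a $_{1}\digamma _{1}$ via \eqref{2.5}, applies the Laplace-type formula \eqref{5.21} for a product of two confluent hypergeometric functions, and then simplifies the algebraic prefactors through \eqref{5.23} and \eqref{5.25} to reach \eqref{5.26}. Your route is shorter and bypasses the special-function integral entirely, at the price of leaning on the closed-form result \eqref{5.1}; the paper's computation is independent of that result and doubles as a consistency check of the kernel. Your handling of the one delicate point is also sound: for fixed $\varepsilon >0$ the expansion of the normalized state converges in $L^{2}\left( \mathbb{R}_{+},dx\right)$ (this is exactly the finiteness of $\mathcal{N}_{\gamma ,\varepsilon }\left( \theta \right)$ established in \eqref{4.3}), so the term-by-term pairing is legitimate by continuity of the inner product, and no interchange of the $\varepsilon$-limit with the integral is required in either approach.
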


\begin{proof}
\bigskip Let us write the  transform $\mathcal{W}_{\gamma}$ as
\begin{equation}
\mathcal{W}_{\gamma }\left[ x\mapsto \left\langle x\mid n;\gamma 
+1\right\rangle \right] \left( e^{i\theta }\right) =\lim_{\varepsilon
\rightarrow 0^{+}}\sqrt{\mathcal{N}_{\gamma ,\varepsilon }\left( \theta
\right) }\left\langle e^{i\theta },\varepsilon ,\gamma \mid n;\gamma 
+1\right\rangle .  \label{5.14}
\end{equation}
Next, we make use of the expression of the eigenstates in \eqref{2.8} and we shall calculate the quantity
\begin{equation}
\mathcal{Q}^{\left( \varepsilon \right) }:=\sqrt{\mathcal{N}_{\gamma
,\varepsilon }\left( \theta \right) }\left\langle e^{i\theta },\varepsilon
,\gamma \mid n;\gamma  +1\right\rangle   \label{5.15}
\end{equation}
in the right hand side of Eq.\eqref{5.14} without passing to the
limit with respect to $\varepsilon .$ So, for instance, we set $\tau =e^{-%
\frac{1}{2}\varepsilon }$ and rewrite \eqref{5.15} as
\begin{align}
\mathcal{Q}^{\left( \varepsilon \right) } &=\int\limits_{0}^{+\infty }\frac{%
\sqrt{2}x^{\gamma +\frac{1}{2}}\left( 1-\tau \right) ^{-\frac{1}{2}\gamma
}\left( 1-\tau e^{i\theta }\right) ^{-1-\frac{1}{2}\gamma }}{\sqrt{\Gamma
\left( \gamma +1\right) }}  \label{5.16} \\
&\times \exp \left( -\frac{1}{2}\left( \frac{1+\tau }{1-\tau }\right)
x^{2}\right) ._{1}\digamma _{1}\left( 1+\frac{\gamma }{2},1+\gamma ,\frac{%
\left( 1-e^{i\theta }\right) \tau x^{2}}{\left( 1-\tau \right) \left( 1-\tau
e^{i\theta }\right) }\right)  \nonumber \\
&\times \left( \frac{2n!}{\Gamma \left( \gamma +1+n\right) }\right) ^{\frac{%
1}{2}}x^{\gamma +\frac{1}{2}}e^{-\frac{1}{2}x^{2}}L_{n}^{\left( \gamma
\right) }\left( x^{2}\right) dx.  \nonumber
\end{align}
We set
\begin{equation}
\kappa :=\frac{\left( 1-e^{i\theta }\right) \tau }{\left( 1-\tau \right)
\left( 1-\tau e^{i\theta }\right) }  \label{5.17}
\end{equation}
and we rewrite \eqref{5.16} as follows
\begin{equation}
\mathcal{Q}^{\left( \varepsilon \right) }=\frac{2\sqrt{n!}\left( 1-\tau
\right) ^{-\frac{1}{2}\gamma }\left( 1-\tau e^{i\theta }\right) ^{-1-\frac{1%
}{2}\gamma }}{\sqrt{\Gamma \left( \gamma +1\right) }\sqrt{\Gamma \left(
\gamma +1+n\right) }}\frak{G}^{\left( \varepsilon \right) },  \label{5.18}
\end{equation}
where
\begin{eqnarray}
\frak{G}^{\left( \varepsilon \right) } &:&=\int\limits_{0}^{+\infty
}x^{2\gamma +1}\exp \left( -\frac{1}{1-\tau }x^{2}\right) L_{n}^{\left(
\gamma \right) }\left( x^{2}\right) \; {_{1}\digamma _{1}}\left( 1+\frac{\gamma }{2},1+\gamma ,\kappa
x^{2}\right) dx.  \label{5.19}
\end{eqnarray}
Writing the Laguerre polynomial in term of the $_{1}\digamma _{1}$ as in 
\eqref{2.5} and making the change of variable $t=x^{2},$ then \eqref{5.19} takes the form
\begin{equation}
\frak{G}^{\left( \varepsilon \right) }=\frac{\left( \gamma +1\right) _{n}}{%
2n!}\int\limits_{0}^{+\infty }t^{\gamma }e^{-\frac{1}{1-\tau }%
t}{}_{1}\digamma _{1}\left( -n,1+\gamma ,t\right) _{1}\digamma _{1}\left( 1+%
\frac{\gamma }{2},1+\gamma ,\kappa t\right) dt.  \label{5.20}
\end{equation}
Now, with the help of the formula (\cite[p.823]{24}):
\begin{eqnarray}
&&\int\limits_{0}^{+\infty }t^{c-1}e^{-st}{}_{1}\digamma _{1}\left( \beta
,c,t\right) _{1}\digamma _{1}\left( b,c,\lambda t\right) dt  \label{5.21} \\
&=&\Gamma \left( c\right) \left( s-1\right) ^{-\beta }\left( s-\lambda
\right) ^{-b}s^{\beta +b-c}._{2}\digamma _{1}\left( \beta ,b,c;\frac{\lambda
}{\left( s-1\right) \left( s-\lambda \right) }\right) ,  \notag
\end{eqnarray}
$\Re c>0$ and $\Re s>\Re \lambda +1,$ for the parameters: $%
c=\gamma +1,$ $\beta =-n,$ $b=1+\frac{\gamma }{2},$ $s=\frac{1}{1-\tau }$
and $\lambda =\kappa ,$ Eq.\eqref{5.20} becomes
\begin{equation}
\frak{G}^{\left( \varepsilon \right) }=\frac{\left( \gamma +1\right)
_{n}\Gamma \left( \gamma +1\right) \tau ^{n}\left( 1-\tau \right) ^{\gamma
+1}}{2n!\left( 1-\kappa \left( 1-\tau \right) \right) ^{1+\frac{\gamma }{2}}}%
._{2}\digamma _{1}\left( -n,1+\frac{\gamma }{2},1+\gamma ;\frac{\kappa
\left( 1-\tau \right) ^{2}}{\tau \left( 1-\kappa \left( 1-\tau \right)
\right) }\right) .  \label{5.22}
\end{equation}
Recalling the expression of $\kappa $ in \eqref{5.17} we find that
\begin{equation}
\frac{\kappa \left( 1-\tau \right) ^{2}}{\tau \left( 1-\kappa \left( 1-\tau
\right) \right) }=1-e^{i\theta }.  \label{5.23}
\end{equation}
We return back to \eqref{5.18} and we obtain that
\begin{eqnarray}
\mathcal{Q}^{\left( \varepsilon \right) } &=&\tau ^{n}\left( \frac{\left(
\gamma +1\right) _{n}}{n!}\right) ^{\frac{1}{2}}\frac{\left( 1-\tau \right)
^{\frac{\gamma }{2}+1}}{(\left( 1-\tau e^{i\theta }\right) \left( 1-\kappa
\left( 1-\tau \right) \right) )^{1+\frac{\gamma }{2}}}  \label{5.24} \\
&&\times _{2}\digamma _{1}\left( -n,1+\frac{\gamma }{2},1+\gamma
;1-e^{i\theta }\right)   \notag
\end{eqnarray}
Using again \eqref{5.17} we find by calculation
\begin{equation}
\frac{\left( 1-\tau \right) ^{\frac{\gamma }{2}+1}}{(\left( 1-\tau
e^{i\theta }\right) \left( 1-\kappa \left( 1-\tau \right) \right) )^{1+\frac{%
\gamma }{2}}}=1  \label{5.25}
\end{equation}
Recalling that $\tau =e^{-\frac{1}{2}\varepsilon }$, we arrive at
\begin{equation}
\mathcal{Q}^{\left( \varepsilon \right) }=e^{-\frac{1}{2}n\varepsilon
}\left( \frac{\left( \gamma +1\right) _{n}}{n!}\right) ^{\frac{1}{2}%
}._{2}\digamma _{1}\left( -n,1+\frac{\gamma }{2},1+\gamma ;1-e^{i\theta
}\right) .  \label{5.26}
\end{equation}
If let $\varepsilon \rightarrow 0^{+},$ then we obtain that
\begin{equation}
\mathcal{W}_{\gamma }\left[ x\mapsto \left\langle x\mid n;\alpha
+1\right\rangle \right] \left( e^{i\theta }\right) =\left( \frac{\left(
\gamma +1\right) _{n}}{n!}\right) ^{-\frac{1}{2}}g_{n}^{\gamma }\left(
e^{i\theta }\right) ,  \label{5.27}
\end{equation}
where the in the left hand side we have exactly the normalized circular
Jacobi polynomials.
\end{proof}

\end{document}